\newcommand{\C}{\mathbb{C}}
\newcommand{\F}{\mathbb{F}}
\newcommand{\ip}[2]{\langle #1|#2 \rangle}
\newcommand{\bracket}[3]{\langle #1|#2|#3 \rangle}
\newcommand{\sm}[1]{\left( \begin{smallmatrix} #1 \end{smallmatrix} \right)}
\DeclareMathOperator{\tr}{tr}
\DeclareMathOperator{\vecc}{vec}
\newcommand{\be}{\begin{equation}}
\newcommand{\ee}{\end{equation}}
\newcommand{\bea}{\begin{eqnarray}}
\newcommand{\eea}{\end{eqnarray}}
\newcommand{\bes}{\begin{equation*}}
\newcommand{\ees}{\end{equation*}}
\newcommand{\beas}{\begin{eqnarray*}}
\newcommand{\eeas}{\end{eqnarray*}}
\newtheorem{thm}{Theorem}
\newtheorem*{thm*}{Theorem}
\newtheorem{lem}[thm]{Lemma}
\newtheorem*{lem*}{Lemma}
\newtheoremstyle{definition}
   {}{}{}{0pt}{\bfseries}{.}{.5cm}
   {{\thmname{#1 }}{\thmnumber{#2}}{\thmnote{ (#3)}}}
\theoremstyle{definition}
\newtheorem{algm}{Algorithm}
\newcommand{\boxdfn}[2]{
\begin{figure}[h]
\begin{center}
\noindent \framebox{
\begin{minipage}{8cm}
\begin{algm}[#1]
\ \\[10pt]
#2
\end{algm}
\end{minipage}
}
\end{center}
\end{figure}
}
\newcommand{\bra}[1]{{\left\langle{#1}\right\vert}}
\newcommand{\ket}[1]{{\left\vert{#1}\right\rangle}}
\newcommand{\qw}[1][-1]{\ar @{-} [0,#1]}
\newcommand{\qwx}[1][-1]{\ar @{-} [#1,0]}
\newcommand{\gate}[1]{*+<.6em>{#1} \POS ="i","i"+UR;"i"+UL **\dir{-};"i"+DL **\dir{-};"i"+DR **\dir{-};"i"+UR **\dir{-},"i" \qw}
\newcommand{\control}{*!<0em,.025em>-=-<.2em>{\bullet}}
\newcommand{\ctrl}[1]{\control \qwx[#1] \qw}
\newcommand{\targ}{*+<.02em,.02em>{\xy ="i","i"-<.39em,0em>;"i"+<.39em,0em> **\dir{-}, "i"-<0em,.39em>;"i"+<0em,.39em> **\dir{-},"i"*\xycircle<.4em>{} \endxy} \qw}
\newcommand{\Qcircuit}{\xymatrix @*=<0em>}
\begin{document}

\title{Learning stabilizer states by Bell sampling}

\author{Ashley Montanaro}
\affiliation{School of Mathematics, University of Bristol, UK}
\email{ashley.montanaro@bristol.ac.uk}

\begin{abstract}
We show that measuring pairs of qubits in the Bell basis can be used to obtain a simple quantum algorithm for efficiently identifying an unknown stabilizer state of $n$ qubits. The algorithm uses $O(n)$ copies of the input state and fails with exponentially small probability.
\end{abstract}

\maketitle


It is well-known and follows from Holevo's theorem~\cite{holevo73} that approximately determining an arbitrary quantum state $\ket{\psi}$ of $n$ qubits requires exponentially many (in $n$) copies of $\ket{\psi}$. One way of circumventing this problem is to relax the notion of what it means to determine $\ket{\psi}$ (e.g.\ by requiring only that we are able to predict the result of ``most'' measurements on $\ket{\psi}$, according to some probability distribution~\cite{aaronson07}); another way is to restrict the class of states to be determined to some class which can be described efficiently. In this setting, we are given a quantum system that is promised to be in a state picked from some family of quantum states, and are asked to determine its state, exactly or approximately.

One example where efficient identification can be achieved is the class of states well approximated by a matrix product state~\cite{cramer10}. Another example, on which we will focus here, is the class of stabilizer states. Aaronson and Gottesman described an efficient procedure for identifying an unknown stabilizer state $\ket{\psi}$ of $n$ qubits~\cite{aaronson08}. One variant of their algorithm uses $O(n^2)$ copies of $\ket{\psi}$. In this variant, all measurements are performed on single copies of $\ket{\psi}$. Another variant uses only $O(n)$ copies of $\ket{\psi}$, but is based on collective measurements across all these copies. This second algorithm is information-theoretically optimal: as there are $2^{\Theta(n^2)}$ stabilizer states on $n$ qubits~\cite{aaronson04a}, identifying $\ket{\psi}$ requires $\Omega(n)$ copies of $\ket{\psi}$ by Holevo's theorem~\cite{holevo73}.

In related work, Low has shown that an unknown element $U$ of the Clifford group on $n$ qubits can be identified with $O(n^2)$ uses of $U$, or even only $O(n)$ if $U^\dag$ is also available~\cite{low09}. Rocchetto has shown that an unknown stabilizer state can be learned efficiently in the PAC model~\cite{rocchetto17}.

Here we will prove the following result:

\begin{thm}
\label{thm:main}
There is a quantum algorithm which identifies an unknown stabilizer state $\ket{\psi}$ of $n$ qubits given access to $O(n)$ copies of $\ket{\psi}$. The algorithm makes collective measurements across at most two copies of $\ket{\psi}$ at a time, runs in time $O(n^3)$ and fails with probability exponentially small in $n$.
\end{thm}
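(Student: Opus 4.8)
The plan is to show that \emph{Bell sampling} --- measuring each of the $n$ pairs of corresponding qubits of $\ket{\psi}\otimes\ket{\psi}$ in the two-qubit Bell basis --- produces a random string that determines the stabilizer group of $\ket{\psi}$ up to signs, and then to pin down the signs with $n$ cheap single-copy Pauli measurements.

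\textbf{Step 1 (the distribution of a Bell sample).} Label the two-qubit Bell basis by $\F_2^2$ via $\ket{\beta_{a,b}} := (X^aZ^b\otimes I)\ket{\Phi^+}$, so a Bell measurement of $\ket{\psi}^{\otimes 2}$ returns a pair $(a,b)\in\F_2^n\times\F_2^n$ that I read as the Pauli $P_{(a,b)}:=X^aZ^b$ (with a fixed Hermitian phase). Using $\ket{\psi}^{\otimes 2}=\sqrt{2^n}\,(\ket{\psi}\bra{\overline{\psi}}\otimes I)\ket{\Omega}$, where $\ket{\Omega}=2^{-n/2}\sum_x\ket{x}\ket{x}$, one computes that outcome $(a,b)$ occurs with probability $2^{-n}\,|\bracket{\overline{\psi}}{P_{(a,b)}}{\psi}|^2$. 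Substituting the stabilizer projector $\ket{\psi}\bra{\psi}=2^{-n}\sum_{Q\in S}Q$ and expanding, I expect this to equal $2^{-n}$ exactly when $(a,b)$ lies in a coset $v_0+M$ of the subspace $M\le\F_2^{2n}$ obtained from $S$ by discarding signs (this map is injective and $M$ is a Lagrangian, i.e.\ maximal isotropic, subspace of dimension $n$, since $S$ is abelian of order $2^n$ and $-I\notin S$), and $0$ otherwise; here $v_0$ is a fixed shift encoding, through the symplectic form, the parity of the number of $Y$'s in each stabilizer element --- which is a \emph{linear} functional on $M$ precisely because $M$ is isotropic. Hence a single Bell sample is a uniformly random element of $v_0+M$. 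The conjugation bookkeeping (tracking $\ket{\overline{\psi}}$ and the shift $v_0$) is the one delicate point; everything after is elementary.

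\textbf{Step 2 (recovering $M$).} Draw $m=O(n)$ Bell samples $v_1,\dots,v_m\in v_0+M$. Conditioned on $v_1$, the differences $v_2-v_1,\dots,v_m-v_1$ are i.i.d.\ uniform on the $n$-dimensional space $M$, so they fail to span $M$ with probability at most $\sum_{j=0}^{n-1}2^{\,j-(m-1)}<2^{\,n+1-m}$, which is $2^{-\Omega(n)}$ once $m\ge 3n$, say. Gaussian elimination over $\F_2$ (cost $O(n^3)$) then yields a basis $m_1,\dots,m_n$ of $M$: equivalently, $n$ Hermitian Paulis $P_{m_1},\dots,P_{m_n}$ that, up to unknown signs, generate the stabilizer of $\ket{\psi}$.

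\textbf{Step 3 (signs and wrap-up).} For each $i$, $\ket{\psi}$ is a $\pm1$ eigenstate of $P_{m_i}$, so measuring $P_{m_i}$ on one fresh copy of $\ket{\psi}$ returns the correct sign $s_i\in\{\pm1\}$ deterministically. Because the $P_{m_i}$ pairwise commute, $\{s_iP_{m_i}\}$ generates an abelian group of order $2^n$ avoiding $-I$ that stabilizes $\ket{\psi}$ --- the full stabilizer group --- which fixes $\ket{\psi}$ up to an (irrelevant) global phase. The total copy count is $2m+n=O(n)$; the only failure event is the spanning failure of Step 2, with probability $2^{-\Omega(n)}$; each Bell or Pauli measurement is implemented by an $O(n)$-gate Clifford circuit; and the classical work is dominated by the $O(n^3)$ Gaussian elimination. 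This gives Theorem~\ref{thm:main}, with Step 1 being the only substantive calculation.
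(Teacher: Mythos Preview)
Your proposal is correct and follows essentially the same algorithm as the paper: Bell sample $O(n)$ times, take pairwise differences to get uniform elements of the stabilizer subspace $M$, extract a basis by Gaussian elimination, then fix signs with $n$ single-copy Pauli measurements.

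The one genuine difference is how you justify that a Bell sample is uniform on a coset of $M$. The paper invokes the Dehaene--De Moor normal form to write $\ket{\psi}\propto\sum_{x\in A} i^{\ell(x)}(-1)^{q(x)}\ket{x}$ with $\ell$ linear, from which $\ket{\psi^*}=\sigma_{10}^{\otimes S}\ket{\psi}$ follows at once, so the sampling probability becomes $2^{-n}|\bracket{\psi}{\sigma_r\sigma_{10}^{\otimes S}}{\psi}|^2$ and the coset shift is explicit. You instead argue intrinsically: complex conjugation sends a stabilizer element $P$ to $(-1)^{\#Y(P)}P$, and since $\#Y(a,b)=\sum_i a_ib_i$ satisfies $\#Y(u+u')=\#Y(u)+\#Y(u')+\omega(u,u')$, the restriction of $\#Y$ to an isotropic $M$ is linear and hence realised by pairing with some $v_0$. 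This is a nice self-contained alternative that avoids citing the normal form; the paper's route is shorter but imports a structural result. Your spanning bound (sequential) and the paper's (union over hyperplanes) are equivalent up to constants.
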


The number of copies of $\ket{\psi}$ used by this algorithm thus matches that of Aaronson and Gottesman's collective-measurement algorithm~\cite{aaronson08}, but the algorithm acts on a smaller number of copies at a time. In addition, the measurements made by the algorithm across pairs of copies of $\ket{\psi}$ are simple to implement: they are based on measuring pairs of corresponding qubits of $\ket{\psi}^{\otimes 2}$ in the Bell basis. This is reminiscent of the algorithm of~\cite{harrow13} for testing product states, where the measurement performed across pairs of qubits was the swap test.

An alternative algorithm for identifying an unknown graph state (a subclass of stabilizer states) on $n$ qubits using $O(n)$ copies has been presented in independent work of Zhao, P\'erez-Delgado and Fitzsimons~\cite{zhao16}. Their algorithm has some structural similarities to the algorithm of the present paper.


\subsection{Preliminaries}

We will use the matrices
\begin{align*}
\sigma_{00}& := \begin{pmatrix}1 & 0\\ 0 & 1\end{pmatrix},\, \sigma_{01} := \begin{pmatrix}0 & 1\\ 1 & 0\end{pmatrix},\,  \sigma_{10} := \begin{pmatrix}1 & 0\\ 0 & -1\end{pmatrix},\\
\sigma_{11}& := \sigma_{10} \sigma_{01} = \begin{pmatrix}0 & 1\\ -1 & 0\end{pmatrix},
\end{align*}
which are the Pauli matrices up to applying $-i$ to $\sigma_{11}$, and the Bell basis, i.e.\ the ordered basis of $\C^4$ which we define by
\begin{align*} \ket{\sigma_{00}} &:= \frac{1}{\sqrt{2}}(\ket{00} + \ket{11}),\, \ket{\sigma_{01}} := \frac{1}{\sqrt{2}}(\ket{01} + \ket{10}),\\
\ket{\sigma_{10}} &:= \frac{1}{\sqrt{2}}(\ket{00} - \ket{11}),\, \ket{\sigma_{11}} := \frac{1}{\sqrt{2}}(\ket{01} - \ket{10}).
\end{align*}
The notation is supposed to highlight the fact that $\ket{\sigma_i} = \vecc(\sigma_i)/\sqrt{2}$, where $\vecc$ is the linear map defined by $\vecc(\ket{x}\bra{y}) = \ket{x}\ket{y}$ for computational basis states $x$, $y$. The $\vecc$ operator preserves inner products: $\ip{\vecc(A)}{\vecc(B)} = \tr A^\dag B$. For $s \in \{0,1\}^{2n}$, we write $\sigma_s := \sigma_{s_1 s_2} \otimes \dots \otimes \sigma_{s_{2n-1} s_{2n}}$, $\ket{\sigma_s} := \ket{\sigma_{s_1 s_2}} \dots \ket{\sigma_{s_{2n-1} s_{2n}}}$. Up to multiplying by $-1$, $\sigma_s \sigma_t = \sigma_{s \oplus t}$.

Measurement in the Bell basis can be implemented by applying the circuit
\[
\Qcircuit @C=1em @R=.7em {
 & \ctrl{1} & \gate{H} & \qw \\
 & \targ & \qw & \qw
} 
\]
and measuring in the computational basis. Given a pure state of $2n$ qubits divided into systems $A_1,\dots,A_n$, $B_1,\dots,B_n$, we call the operation of measuring each pair $A_i B_i$ of qubits in the Bell basis {\em Bell sampling}. Each such measurement returns a $2n$-bit string.

For any state $\ket{\psi}$, let $\ket{\psi^*}$ denote the complex conjugate (taken in the computational basis).

\begin{lem}
\label{lem:bsamp}
Let $\ket{\psi}$ be a state of $n$ qubits. Bell sampling on $\ket{\psi}^{\otimes 2}$ returns outcome $r$ with probability
\[ \frac{|\bracket{\psi}{\sigma_r}{\psi^*}|^2}{2^n}. \]
\end{lem}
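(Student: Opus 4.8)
The plan is to rewrite both the measured state $\ket{\psi}^{\otimes 2}$ and the Bell-basis projectors through the $\vecc$ map, turning each Bell-sampling amplitude into a Hilbert--Schmidt inner product of $n$-qubit matrices. Reorganise the $2n$ qubits of $\ket{\psi}^{\otimes 2}$ into the pairs $A_1 B_1, \dots, A_n B_n$ on which the Bell measurements act, and let $\vecc'$ be the ``interleaved'' vectorisation, $\vecc'(\ket x \bra y) = \ket{x_1 y_1}\ket{x_2 y_2}\cdots\ket{x_n y_n}$. This is the map $\vecc$ of the preliminaries composed with the fixed permutation of tensor factors that interleaves the two $n$-qubit registers; being a permutation unitary, it keeps inner products intact, $\ip{\vecc'(A)}{\vecc'(B)} = \tr A^\dag B$, and it additionally satisfies $\vecc'(M_1 \otimes \cdots \otimes M_n) = \vecc(M_1) \otimes \cdots \otimes \vecc(M_n)$.

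First I would identify the reorganised input state: writing $\ket{\psi} = \sum_x \psi_x \ket{x}$, it equals $\sum_{x,y} \psi_x \psi_y \ket{x_1 y_1}\cdots\ket{x_n y_n} = \vecc'(\ket{\psi}\bra{\psi^*})$, using $\overline{(\psi^*)_y} = \psi_y$ so that the complex conjugate sits exactly where $\ket{\psi^*}$ should. Second, since each single-qubit Bell vector is $\ket{\sigma_{ab}} = \vecc(\sigma_{ab})/\sqrt{2}$, the outcome-$r$ projector is onto $\ket{\sigma_r} = \vecc'(\sigma_r)/\sqrt{2^n}$, where $\sigma_r = \sigma_{r_1 r_2} \otimes \cdots \otimes \sigma_{r_{2n-1} r_{2n}}$ as in the preliminaries. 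Hence the probability of outcome $r$ is $|\ip{\sigma_r}{\vecc'(\ket{\psi}\bra{\psi^*})}|^2 = 2^{-n}\,|\ip{\vecc'(\sigma_r)}{\vecc'(\ket{\psi}\bra{\psi^*})}|^2$.

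Finally I would apply inner-product preservation of $\vecc'$ to obtain $\ip{\vecc'(\sigma_r)}{\vecc'(\ket{\psi}\bra{\psi^*})} = \tr(\sigma_r^\dag \ket{\psi}\bra{\psi^*}) = \bra{\psi^*}\sigma_r^\dag\ket{\psi}$, and observe that this is a scalar, so its modulus equals $|\bracket{\psi}{\sigma_r}{\psi^*}|$ (the dagger is harmless under $|\cdot|$; concretely $\sigma_r^\dag = \pm\,\sigma_r$). Combining with the previous step yields the claimed probability $|\bracket{\psi}{\sigma_r}{\psi^*}|^2/2^n$.

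The argument is mostly bookkeeping, and the one place that needs care — which I expect to be the only real, and minor, obstacle — is keeping the two vectorisation conventions apart: the ``block'' $\vecc$ of the preliminaries versus the ``interleaved'' $\vecc'$ matching the $A_i B_i$ pairing, and in particular checking that it is $\vecc'$, not $\vecc$, for which $\vecc'(\bigotimes_i M_i) = \bigotimes_i \vecc(M_i)$ holds. That identity is precisely what makes the product structure of the Bell basis line up with $\vecc'$. As a cross-check one could instead verify the $n=1$ case by a direct expansion of $\ket{\psi}^{\otimes 2}$ in the Bell basis and confirm it agrees with the formula.
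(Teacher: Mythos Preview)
Your proposal is correct and follows essentially the same route as the paper: identify $\ket{\psi}\ket{\psi}$ with $\vecc(\ket{\psi}\bra{\psi^*})$, identify $\ket{\sigma_r}$ with $\vecc(\sigma_r)/\sqrt{2^n}$, and use inner-product preservation of $\vecc$ to convert the amplitude into $\tr(\sigma_r^\dag \ket{\psi}\bra{\psi^*})$. The only difference is that the paper's three-line proof leaves the ordering issue implicit, whereas you make it explicit via the interleaved map $\vecc'$ --- which is exactly the point you flagged as the one place needing care.
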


\begin{proof}
We have $\ket{\psi}\ket{\psi} = \vecc(\ket{\psi}\bra{\psi^*})$, so $|\ip{\sigma_r}{\psi}\ket{\psi}|^2 =$ $2^{-n} |\tr \sigma_r^\dag \ket{\psi}\bra{\psi^*}|^2 = 2^{-n} |\bracket{\psi}{\sigma_r}{\psi^*}|^2$.
\end{proof}


\section{Learning stabilizer states}

We now show that Bell sampling can be used to learn stabilizer states efficiently. By a result of~\cite{dehaene02} (see~\cite{vandennest08} for an alternative proof), up to an overall phase every stabilizer state $\ket{\psi}$ can be written in the form
\[ \ket{\psi} = \frac{1}{\sqrt{|A|}} \sum_{x \in A} i^{\ell(x)} (-1)^{q(x)} \ket{x}, \]
where $A$ is an affine subspace of $\F_2^n$, and $\ell,q: \{0,1\}^n \rightarrow \{0,1\}$ are linear and quadratic (respectively) polynomials over $\F_2$. As $\ell$ is linear, $\ell(x) = s \cdot x$ for some $s \in \{0,1\}^n$, so we have $i^{\ell(x)} = \prod_{k \in S} i^{x_k}$ for some $S \subseteq [n]$. Hence
\[ \ket{\psi^*} = \sigma_{10}^{\otimes S} \ket{\psi}. \]
If we perform Bell sampling on $\ket{\psi}^{\otimes 2}$, by Lemma \ref{lem:bsamp} we receive outcome $r$ with probability
\be \label{eq:stabsamp} \frac{|\bracket{\psi}{\sigma_r}{\psi^*}|^2}{2^n} = \frac{| \bra{\psi} \sigma_r \sigma_{10}^{\otimes S}\ket{\psi} |^2}{2^n}. \ee
Any stabilizer state $\ket{\psi}$ is uniquely specified by a commuting subgroup $G$ of Pauli matrices $M$ (with potentially additional overall phases $\pm1$) such that $|G|=2^n$, $M\ket{\psi} = \ket{\psi}$ for all $M \in G$, and $\bracket{\psi}{M}{\psi} = 0$ for all Pauli matrices $M \notin G$. Let $T$ denote the set of strings $t \in \{0,1\}^{2n}$ such that $\sigma_t \in G$, up to a phase. Then $T$ is an $n$-dimensional linear subspace of $\F_2^{2n}$. Determining $T$ suffices to uniquely determine $\ket{\psi}$: although $T$ does not contain information about phases, once we have found a basis for $T$, we can measure $\ket{\psi}$ in the eigenbasis of each corresponding Pauli matrix $M$ to decide whether $M\ket{\psi} = \ket{\psi}$ or $M\ket{\psi} = -\ket{\psi}$.

By eqn.\ (\ref{eq:stabsamp}), Bell sampling gives an outcome $r$ which is uniformly distributed on the set $\{t \oplus s:t \in T\}$ for some $s \in \{0,1\}^{2n}$. Thus, for any two such outcomes $r_1$, $r_2$, the sum $r_1 \oplus r_2$ is uniformly distributed in $T$. In order to find a basis for $T$, we can therefore produce $k+1$ Bell samples $r_0,r_1,\dots,r_k$, for some $k$, and consider the uniformly random elements of $T$ given by $r_1 \oplus r_0,r_2 \oplus r_0,\dots,r_k \oplus r_0$. If the dimension of the subspace of $\F_2^{2n}$ spanned by these vectors is $n$, any basis of this subspace is a basis for $T$.

We give an explicit description of this algorithm as Algorithm \ref{alg:stabilizer} (boxed). The algorithm uses $5n+2$ copies of $\ket{\psi}$. The time complexity of the algorithm is dominated by the basis-determination step, which can be achieved using Gaussian elimination in time $O(n^3)$; technically, this can be improved to $O(n^\omega)$, where $\omega < 2.373$ is the matrix multiplication exponent. Note that any algorithm for learning a stabilizer state requires time $\Omega(n^2)$ just to write the output.

\boxdfn{Learning stabilizer states}{
\begin{enumerate} \vspace{-11pt}
\item Set $S = \emptyset$.
\item Create two copies of $\ket{\psi}$ and perform Bell sampling, obtaining outcome $r_0$.
\item Repeat the following $2n$ times:
\begin{enumerate}
\item Create two copies of $\ket{\psi}$ and perform Bell sampling, obtaining outcome $r$.
\item Add $r \oplus r_0$ to $S$.
\end{enumerate}
\item Determine a basis for $S$; call this basis $B$.
\item For each element of $B$, measure a copy of $\ket{\psi}$ in the eigenbasis of the corresponding Pauli matrix $M$ to determine whether $M\ket{\psi} = \ket{\psi}$ or $M\ket{\psi} = -\ket{\psi}$.
\end{enumerate}
\label{alg:stabilizer}
}

The algorithm fails (i.e.\ does not identify $\ket{\psi}$) if each of the $2n$ samples $r \oplus r_0$ lies in a subspace of $T$ of dimension at most $n-1$. The probability that the samples are all contained in any one such subspace is $2^{-2n}$; by a union bound over all subspaces of dimension $n-1$, the algorithm fails with probability at most $2^{-n}$.

Algorithm \ref{alg:stabilizer} can be seen as a generalisation of a result of R\"otteler~\cite{roetteler09} which gives an $O(n)$-query algorithm for learning functions $f:\{0,1\}^n \rightarrow \{0,1\}$ which are polynomials of degree 2 over $\F_2$. The algorithm of~\cite{roetteler09} works by producing states of the form
\[ \ket{\psi} = \frac{1}{\sqrt{2^n}} \sum_{x \in \{0,1\}^n} (-1)^{f(x)} \ket{x}, \]
and then proceeds in a similar way to Algorithm \ref{alg:stabilizer} (although it is presented differently).



{\bf Acknowledgements.} This work was largely carried out while the author was at the University of Cambridge, and was supported by the UK EPSRC (EP/G049416/2, EP/L021005/1). Thanks to Joe Fitzsimons for pointing out ref.~\cite{zhao16}.

\bibliographystyle{plain}
\bibliography{../../thesis}

\begin{thebibliography}{10}

\bibitem{aaronson07}
S.~Aaronson.
\newblock The learnability of quantum states.
\newblock {\em Proceedings of the Royal Society A}, 463:2088, 2007.
\newblock {\tt quant-ph/0608142}.

\bibitem{aaronson04a}
S.~Aaronson and D.~Gottesman.
\newblock Improved simulation of stabilizer circuits.
\newblock {\em Phys. Rev. A}, 70:052328, 2004.
\newblock {\tt quant-ph/0406196}.

\bibitem{aaronson08}
S.~Aaronson and D.~Gottesman.
\newblock Identifying stabilizer states, 2008.
\newblock \url{http://pirsa.org/08080052/}.

\bibitem{cramer10}
M.~Cramer, M.~Plenio, S.~Flammia, R.~Somma, D.~Gross, S.~Bartlett,
  O.~Landon-Cardinal, D.~Poulin, and Y.-K. Liu.
\newblock Efficient quantum state tomography.
\newblock {\em Nature Communications}, 1(9):49, 2010.
\newblock {\tt arXiv:1101.4366}.

\bibitem{dehaene02}
J.~Dehaene and B.~De Moor.
\newblock Clifford group, stabilizer states, and linear and quadratic
  operations over {GF(2)}.
\newblock {\em Phys. Rev. A}, 68:042318, 2003.
\newblock {\tt quant-ph/0304125}.

\bibitem{harrow13}
A.~Harrow and A.~Montanaro.
\newblock Testing product states, quantum {M}erlin-{A}rthur games and tensor
  optimization.
\newblock {\em J. ACM}, 60(1), 2013.
\newblock {\tt arXiv:1001.0017}.

\bibitem{holevo73}
A.~S. Holevo.
\newblock Bounds for the quantity of information transmitted by a quantum
  communication channel.
\newblock {\em Problemy Peredachi Informatsii}, 9(3):3--11, 1973.
\newblock English translation {\em Problems of Information Transmission}, vol.
  9, pp. 177-183, 1973.

\bibitem{low09}
R.~Low.
\newblock Learning and testing algorithms for the {C}lifford group.
\newblock {\em Phys. Rev. A}, 80:052314, 2009.
\newblock {\tt arXiv:0907.2833}.

\bibitem{rocchetto17}
A.~Rocchetto.
\newblock Stabiliser states are efficiently {PAC}-learnable, 2017.
\newblock {\tt arXiv:1705.00345}.

\bibitem{roetteler09}
M.~R{\"o}tteler.
\newblock Quantum algorithms to solve the hidden shift problem for quadratics
  and for functions of large {G}owers norm.
\newblock In {\em Proc. MFCS'09, LNCS vol. 5734}, pages 663--674, 2009.
\newblock {\tt arXiv:0911.4724}.

\bibitem{vandennest08}
M.~{Van den Nest}.
\newblock Classical simulation of quantum computation, the {G}ottesman-{K}nill
  theorem, and slightly beyond.
\newblock {\em Quantum Inf. Comput.}, 10(3--4):0258--0271, 2010.
\newblock {\tt arXiv:0811.0898}.

\bibitem{zhao16}
L.~Zhao, C.~P{\'e}rez-Delgado, and J.~Fitzsimons.
\newblock Fast graph operations in quantum computation.
\newblock {\em Phys. Rev. A}, 93:032314, 2016.
\newblock {\tt arXiv:1510.03742}.

\end{thebibliography}

\end{document}